\newcommand{\V}{\bm{V}}
\newcommand{\R}{\mathbb{R}}
\newcommand{\N}{\mathbb{N}}
\newcommand{\Q}{\mathbb{Q}}
\newcommand{\mL}{\mathcal{L}}
\newcommand{\CL}{\mathscr{C}}
\newcommand{\xb}{\mathbf{x}}
\newcommand{\yb}{\mathbf{y}}
\newcommand{\ab}{\mathbf{a}}
\newcommand{\bb}{\mathbf{b}}
\newcommand{\fb}{\mathbf{f}}
\newcommand{\gb}{\bm{g}}
\newcommand{\C}{\mathbb{C}}
\newcommand{\Cn}{\C^n}
\def\1{{\mathbbm 1}}
\newcommand{\gbt}{\widetilde{\gb}}
\newcommand{\compose}{\textsf{Compose}\xspace}
\newcommand{\InRadical}{\textsf{InRadical}\xspace}
\newcommand{\InvariantSet}{\textsf{InvariantSet}\xspace}
\newcommand{\GenerateLoops}{\textsf{GenerateLoops}\xspace}
\theoremstyle{plain}
\newtheorem{theorem}{Theorem}[section]
\newtheorem{proposition}[theorem]{Proposition}
\newtheorem{definition}[theorem]{Definition}
\theoremstyle{definition}
\newtheorem{example}{Example}
\theoremstyle{remark}
\newcommand{\programbox}[2][\linewidth]{
\begin{samepage}\normalfont
\vspace*{.5em}\hspace*{0.3cm}\fbox{
\hspace*{-0.3cm}\begin{minipage}{#1}
\vspace*{-.1em}\begin{algorithmic}
#2
\end{algorithmic}
\vspace*{-.2em}
\end{minipage}
}\\
\end{samepage}
}
\newcommand{\programboxappendix}[2][\linewidth]{
\begin{samepage}\normalfont
\vspace*{.5em}\hspace*{0.0cm}\fbox{
\hspace*{-0.3cm}\begin{minipage}{#1}
\vspace*{-.1em}\begin{algorithmic}
#2
\end{algorithmic}
\vspace*{-.2em}
\end{minipage}
}\\
\end{samepage}
}
\title{Beyond Affine Loops: A Geometric Approach to Program Synthesis 
}
\author{Erdenebayar Bayarmagnai, Fatemeh Mohammadi, and R\'emi Pr\'ebet
}
\date{}
\begin{document}

\maketitle

\begin{abstract}
Ensuring software correctness remains a fundamental challenge in formal program verification. One promising approach relies on finding polynomial invariants for loops. Polynomial invariants are properties of a program loop that hold before and after each iteration. Generating polynomial invariants is a crucial task for loops, but it is an undecidable problem in the general case. Recently, an alternative approach to this problem has emerged, focusing on synthesizing loops from invariants. However, existing methods only synthesize affine loops without guard conditions from polynomial invariants. In this paper, we address a more general problem, allowing loops to have polynomial update maps with a given structure, inequations in the guard condition, and polynomial invariants of arbitrary form.

In this paper, we use algebraic geometry tools to design and implement an algorithm that computes a finite set of polynomial equations whose solutions correspond to all loops satisfying the given polynomial invariants. In other words, we reduce the problem of synthesizing loops to finding solutions of polynomial systems within a specified subset of the complex numbers. The latter is handled in our software using an SMT solver.\end{abstract}

\maketitle
\section{Introduction}

Loop invariants are properties that hold before and after each iteration of a loop. They are key to automating program verification, which ensures that programs produce correct results before execution. Various well-established methods use loop invariants for safety verification, such as the Floyd-Hoare inductive assertion technique \cite{floyd1993assigning} and termination verification via standard ranking functions \cite{manna2012temporal}. When a loop invariant is a polynomial equation or inequality, it is called a polynomial invariant.

In this work, instead of generating polynomial invariants for a given loop, we address the reverse problem, that is synthesizing a loop that satisfies given polynomial invariants.

We consider polynomial loops $\mL(\ab, h, F)$ of the form
\begin{center}
  \programbox[0.7\linewidth]{
\State$\xb:=(x_{1},\ldots, x_n)\gets \ab:=(a_1,\ldots, a_n)$
\While{$h(\xb)\neq 0$}
\State $\begin{pmatrix}
x_1 \\
x_2 \\
\vdots \\
x_n
\end{pmatrix}
\xleftarrow{F}
\begin{pmatrix}
F_1\\
F_2\\
\vdots\\
F_n
\end{pmatrix}
$
\EndWhile
}\label{page:alg}
\end{center}
where $x_i$ are the program variables with initial values $a_i$, $h\in \C[\xb]$, and $F =(F_1,\ldots, F_n)$ is a sequence of polynomials in $\C[\xb]$. 
The inequation $h(\xb) \neq 0$, called the \emph{guard}, is assumed to be a single inequality for simplicity (we can replace $h_1 \neq 0, \ldots, h_k \neq 0$ with their product $h_1 \cdot \ldots \cdot h_k \neq 0$ without loss of generality). Where there is no $h$ we simply write $\mL(\ab, 1, F)$, which corresponds to an infinite loop.

Previous work~\cite{ISSAC2023Laura,hitarth_et_al:LIPIcs.STACS.2024.41,humenberger2022LoopSynthesis} has focused on update maps that are restricted to be linear. In this paper, we go beyond this limitation by allowing update maps to be arbitrary polynomial functions. We present a method for computing a system of equations whose common solutions characterize all coefficient assignments for update maps of loops that satisfy a given set of polynomial invariants. To illustrate our main objective, we now present a motivating example.

\begin{example}\label{ex:introduction}
Consider the following polynomial loop: 

\programboxappendix[0.5\linewidth]{
\State$(x_1, x_2, x_3)\gets(1,1,-1)$
\While{true}
\State $\begin{pmatrix}
x_1 \\
x_2\\
x_3
\end{pmatrix}
\xleftarrow{F}
\begin{pmatrix}
\lambda_1 x_1^3 + \lambda_2 x_2^2 \\
\lambda_3 x_1 + \lambda_4 x_2^2 \\
\lambda_5 x_1
\end{pmatrix}$
\EndWhile
}

\smallskip

\noindent By applying Algorithm~\ref{algo:generateloops}, we construct polynomials $P_1, \ldots, P_4$, 
such that $(\lambda_1, \ldots, \lambda_5)$ is a common root of these polynomials if and only if the loop satisfies the polynomial invariants $g_1 = x_2^2 - x_1$ and $g_2 = x_3^3 + 2x_2^2 - x_1$.  For instance, $(-3, 3, 1, -1, 0)$ is a common root of $\{P_1, \ldots, P_4\}$. Thus, if the update map is $F(x_1, x_2, x_3) = (-3x_1^3 + 3x_2^2, x_1 - x_2^2, 0)$, the loop satisfies the invariants $g_1$ and $g_2$. See Example~\ref{exa:algo2} for further details.
\end{example}

\noindent\textbf{Related works.}
The computation of polynomial invariants for loops has been a prominent area of research over the past two decades \cite{bayarmagnai2024algebraic, de2017synthesizing, hrushovski2018polynomial, karr1976affine, kovacs2008reasoning, kovacs2023algebra, rodriguez2004automatic, rodriguez2007automatic, rodriguez2007generating}. However, computing invariant ideals is undecidable for general loops \cite{hrushovski2023strongest}. As a result, efficient methods have been developed for restricted classes of loops, particularly those where the assertions are linear or can be transformed into linear assertions.

The converse problem of synthesizing loops from given invariants has received less attention, primarily focusing on linear and affine loops. These studies vary in the types of invariants considered: linear invariants in \cite{saurabh2010}, a single quadratic polynomial in \cite{hitarth_et_al:LIPIcs.STACS.2024.41}, and pure difference binomials in \cite{ISSAC2023Laura}. In contrast, \cite{humenberger2022LoopSynthesis} explores general polynomial invariants but lacks the completeness properties of earlier work and does not address guard conditions.

Higher degree loops have been studied in a more general framework in \cite{synthesis2023algebro, Fatemeha}, which also takes polynomial inequalities as input. However, the loops synthesized by this algorithm are restricted to those where the input invariants are inductive, meaning that if the invariant holds after one iteration, it holds for all subsequent iterations.

\smallskip
\noindent\textbf{Our contributions.} 
We consider the problem of generating polynomial loops with guards from arbitrary polynomial invariants. As a first step, we construct a polynomial system whose solutions correspond precisely to loops with a given structure that satisfy the specified invariants. We then discuss strategies for solving this system. 
In practice, in most cases a satisfying solution is found using an SMT solver.

Section~\ref{section:preliminaries} recalls the definition of invariant sets and key results from \cite{bayarmagnaiIssac}. Section~\ref{section:generateloops} introduces a method for identifying multiple polynomial invariants (Proposition~\ref{prop:invarianttest}) and proves that the set $\CL(\ab, h, \fb; \gb)$ of coefficients of polynomial maps of loops satisfying the invariants $\gb$ forms an algebraic variety. We also present Algorithm~\ref{algo:generateloops}, which computes the polynomials defining $\CL(\ab, h, \fb; \gb)$. Section~\ref{sec:polysolve} discusses strategies for solving polynomial systems and their limitations, while Section~\ref{sec:implementation} presents our algorithm's implementation and experimental results.

\section{Preliminaries}\label{section:preliminaries}
In the following, we present some terminology from algebraic geometry. For further details, we refer the reader to \cite{cox2013ideals,kempf_1993, shafarevich1994basic}. 
We denote the field of complex numbers by $\mathbb{C}$. Throughout the paper, $\xb$ denotes indeterminates $x_1, \dots, x_n$, and $\C[\xb]$ the multivariate polynomial ring in these variables. 

\smallskip\noindent{\bf Ideals.} A polynomial ideal $I$ is a subset of $\mathbb{C}[\xb]$ that is closed under addition, $0\in I$ and for any $f\in \C[\xb]$ and $g\in I$, $fg\in I$. Given a subset $S$ of $\C[\xb]$, the ideal generated by $S$ is  
$$\langle S \rangle =\{a_1f_1+\ldots +a_mf_m\mid a_i\in \C[\xb], f_i\in S, m\in \N\}.$$
By Hilbert Basis Theorem~\cite[Theorem 4, Chap.~2]{cox2013ideals}, every ideal is finitely generated. For a subset $X \subset \Cn$, the defining ideal $I(X)$ consists of all polynomials vanishing on $X$. The radical of an ideal $I \subset \C[\xb]$ is defined as 
$$\sqrt{I}=\{f\in \C[\xb]| \,f^m\in I \text{ for some } m\in \N\}.$$

\smallskip\noindent{\bf Varieties.} For $S\subset \C[x]$, the algebraic variety $\V(S)$ is the common zero set of all polynomials in $S$. Moreover, $\V(S)=\V(\langle S\rangle)$ and so every algebraic variety is the vanishing locus of finitely many polynomials. 

\smallskip\noindent{\bf Polynomial maps.} A map $F:\C^n \longrightarrow \C^m$ is a polynomial map if there exist $f_1,\ldots, f_m\in \C[x_1,\ldots, x_m]$ such that $$F(x)=(f_1(x),\ldots, f_m(x))$$ for all $x\in \C^n$. For simplicity, we will identify polynomial maps and their corresponding polynomials.

Here, we introduce the main object of this paper and build upon key results from \cite{bayarmagnaiIssac} by recalling and extending them. 

\begin{definition}\label{def:invariantset}
Let $F : \Cn \longrightarrow \Cn$ be a  map and $X$ be a subset of $\Cn$.
The invariant set of $(F,X)$ is defined as:
\begin{center}
    $S_{(F,X)} = \{x \in X \mid \forall m\in \N, F^{(m)}(x) \in X\},$
\end{center}
where~$F^{(0)}(x)=x$~and~$F^{(m)}(x)= F(F^{(m-1)}(x))$~for~$m>1$.
\end{definition}

The following proposition enables the computation of invariant sets using tools from algebraic geometry, relying on the Hilbert Basis Theorem, which implies that descending chains of algebraic varieties eventually stabilize.

 \begin{proposition}\label{prop:stabilization}
Let $X\subseteq\Cn$ be an algebraic variety and consider a polynomial map $F : \Cn \longrightarrow \Cn$. We define $$X_m=X\cap F^{-1}(X)\cap \ldots \cap F^{-m}(X)$$ for all $m \in \N$. Then, there exists $N\in \N$ such that 
$X_N=X_{N+1}$, and for any such index
$X_N = S_{(F,X)}$.
\end{proposition}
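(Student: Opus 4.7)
The plan is to combine three ingredients: (i) each $X_m$ is an algebraic variety, (ii) the sequence $(X_m)$ is descending and hence stabilizes by Noetherianity, and (iii) once stabilization occurs at index $N$, the variety $X_N$ coincides with $S_{(F,X)}$ because the latter is by definition the intersection of all $F^{-m}(X)$.

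First, I would verify that every $X_m$ is an algebraic variety. Writing $X = \V(g_1,\ldots,g_r)$, the preimage $F^{-k}(X)$ is cut out by the polynomials $g_i \circ F^{(k)}$, so it is a variety, and $X_m = X \cap F^{-1}(X) \cap \cdots \cap F^{-m}(X)$ is a finite intersection of varieties, hence a variety. Obviously $X_{m+1} \subseteq X_m$, yielding a descending chain $X_0 \supseteq X_1 \supseteq X_2 \supseteq \cdots$. Applying the correspondence between varieties and radical ideals, this chain corresponds to an ascending chain of ideals in $\C[\xb]$, which must stabilize by the Hilbert Basis Theorem. Therefore there is some $N \in \N$ with $X_N = X_{N+1}$.

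The key algebraic identity driving the second part is the recursion
\[
X_{m+1} \;=\; X \cap F^{-1}(X_m),
\]
obtained directly by factoring out $F^{-1}$ in the definition. I would use this to show by induction on $k \geq 0$ that $X_{N+k} = X_N$ whenever $X_N = X_{N+1}$: assuming $X_{N+k} = X_N$, we get $X_{N+k+1} = X \cap F^{-1}(X_{N+k}) = X \cap F^{-1}(X_N) = X_{N+1} = X_N$.

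Finally, I would translate the definition of $S_{(F,X)}$ into intersection form: the condition $F^{(m)}(x) \in X$ for all $m \in \N$ is exactly $x \in \bigcap_{m \geq 0} F^{-m}(X) = \bigcap_{m \geq 0} X_m$. Thanks to the stabilization just established, this infinite intersection equals $X_N$, giving $S_{(F,X)} = X_N$ as claimed. I do not anticipate a real obstacle here: the only non-routine point is recognizing the recursion $X_{m+1} = X \cap F^{-1}(X_m)$, which propagates stabilization from one step to all later steps; everything else is a standard application of the Noetherian property of $\C[\xb]$.
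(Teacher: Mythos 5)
Your proof is correct and follows essentially the same route the paper intends (it only sketches the argument, attributing it to the Hilbert Basis Theorem and the stabilization of descending chains of varieties, with details deferred to the cited prior work): Noetherian stabilization of the chain $X_0\supseteq X_1\supseteq\cdots$, plus the recursion $X_{m+1}=X\cap F^{-1}(X_m)$ to show any stabilization index gives $X_N=\bigcap_m X_m=S_{(F,X)}$. No gaps; the handling of ``for any such index'' via that recursion is exactly the right point to make explicit.
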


Algorithm~\ref{algo1} computes invariant sets via an iterative outer approximation that converges in finitely many steps, based on Proposition~\ref{prop:stabilization}. It relies on the following procedures:
\begin{itemize}
    \item ``Compose'' takes as input two sequences of polynomials $\gb=(g_1,\ldots, g_m)$ and $F=(F_1,\ldots, F_n)$ in $\Q[\xb]$ and outputs the  polynomials $$\qquad g_1\big(F_1(\xb),\,\ldots,\, F_n(\xb)\big),\ldots, g_m\big(F_1(\xb),\ldots, F_n(\xb)\big).$$ 
    \item ``InRadical'' takes as input a sequence of polynomials $\widetilde{\gb}$ and a finite set $S$ in $\Q[\xb]$ and outputs ``\texttt{True}'' if $\widetilde{\gb}\subset \sqrt{\langle S\rangle}$; ``\texttt{False}'' otherwise.
\end{itemize}
These procedures are classic routines in symbolic computations, and can be performed using various efficient techniques such as Gr\"obner bases \cite{cox2013ideals}. 
See \cite{bayarmagnai2024algebraic} for more details.
\begin{algorithm}[H]
\caption{\InvariantSet}\label{algo1}
\begin{algorithmic}[1]
\Require $\gb$ and~$F = (F_1,\ldots, F_n)$ are sequences in $\mathbb{Q}[\xb]$.
\Ensure Polynomials whose common zero-set is $S_{(F,{\V(\gb)})}$.
\State\label{Step:algo1.1} $S \gets \{\gb\};$
\State\label{Step:algo1.2} $\gbt \gets \compose(\gb,\,F);$
\While{ $\InRadical(\gbt ,\,S)==\texttt{False}$}
\label{Step:algo1.3}\State\label{Step:algo1.4} $S \gets S \cup \{\gbt\};$
\State\label{Step:algo1.5} $\gbt \gets \compose(\gbt,\,F);$
\EndWhile
\State \Return $S$;
\end{algorithmic}
\end{algorithm}
The proof of the termination and correctness of Algorithm~\ref{algo1} follows from Proposition~\ref{prop:stabilization} and 
\cite[Theorem 2.4]{bayarmagnaiIssac}.

\begin{example}
    Let us compute the invariant set of a polynomial map $F(x_1,x_2)=(2x_1-3x_2, x_1+x_2)$ and an algebraic variety $X=\V(x_1^2-x_2^2+x_1x_2)$. On input $F$ and $g=x_1^2-x_2^2+x_1x_2$, Algorithm~\ref{algo1} computes the invariant set of $F$ and $X$  through the following steps.
    \begin{itemize}
        \item At Step~\ref{Step:algo1.1}, $S$ gets $\{g\}$ and at Step~\ref{Step:algo1.2}, $\widetilde{g}$ gets $$\text{\compose}(g,F)=5x_1^2-15x_2x_2+5x_2^2.$$
    \item At Step~\ref{Step:algo1.3}, a Gr\"obner basis of the ideal $\langle g, 1 - t\widetilde{g} \rangle$ is computed to verify that  
\[
\InRadical(\widetilde{g}, S) = \texttt{False}.
\]
    \item At Step~\ref{Step:algo1.4}, $S$ gets $\{g,g\circ F\}$ and at Step~\ref{Step:algo1.5}, $\widetilde{g}$ is set to $$\compose(\widetilde{g},F)=-5x_1^2-35x_1x_2+95x_2^2$$
\item This time, $\InRadical(\widetilde{g}, F) = \texttt{True}$, so the while 
loop terminates.    
\end{itemize}
Therefore, the invariant set $S_{(F,\V(g))}$ is the vanishing locus of polynomials $\{g, g\circ F\}$.
\end{example}

\section{From loops to polynomial systems}\label{section:generateloops}
In this section, we present a method for identifying all loops that satisfy given polynomial invariants by formulating them as solutions of a polynomial system via invariant sets. This reduces the problem to solving a polynomial system, which we explore in the next section.

\begin{definition}\label{def:PI}
A~polynomial~$g$ is an invariant of the loop $\mathcal{L}(\ab, h, F)$~if,~for~any~$m \in \mathbb{Z}_{\geq 0}$, either:  
$$g(F^{(m)}(\ab)) = 0,$$  
or there exists $m\in \mathbb{Z}_{\geq 0}$ such that $g(F^{(m)}(\ab)) =h(F^{(m)}(\ab))=0$ and for every $l<m$:  
\[
g(F^{(l)}(\ab)) = 0 \text{ and }  h(F^{(l)}(\ab)) \neq 0.
\]
\end{definition}
\begin{definition}\label{def:InvId}
    Let $\mL(\ab, h, F)$ be a polynomial loop. The set of all polynomial invariants for $\mL(\ab, h, F)$  is called the invariant ideal of $\mL$ and is denoted by $I_{\mL(\ab, h, F)}$. 
\end{definition}

The invariant ideal is an ideal of the ring $\C[x_1,\ldots, x_n]$ where $x_1,\ldots, x_n$ are  program variables~\cite{rodriguez2004automatic}. Let $f_1,\ldots, f_n$ be polynomials in $\C[x_1,\ldots, x_n]$. We denote by $span\{f_1,\ldots, f_n\}$ the vector space they generate.

\begin{definition}\label{def:Ffb}
    Let $\fb_1 =(f_{1,1},\ldots, f_{1,l_1}),\ldots, \fb_n =(f_{n,1},\ldots, f_{n,l_n})$ and $(F_1,\ldots, F_n)$ be sequences of polynomials in $\C[\xb]$ 
    such that for every i, $$F_i =\displaystyle\sum_{j=1}^{l_i} b_{i,j}f_{i,j}$$ for some $b_{i,j}$'s $ \in \C$. Let $\fb=(\fb_1\ldots, \fb_n)$ and define the  map $$F_{\fb,\bb}:\Cn \longrightarrow \Cn$$ with $F_{\fb,\bb}(x)=(F_1(x),\ldots, F_n(x))$.
    
\end{definition}

The object defined below is the primary focus of this paper. We prove that it is an algebraic variety and compute the polynomial equations that define it.
\begin{definition}
    Using the notations of Definition~\ref{def:Ffb}, let 
        $h\in \C[\xb], \gb = (g_1,\ldots, g_m)$ be a sequence of polynomials in $\C[\xb]$, and  $\ab\in \Cn$.
    Then, the 
    polynomial loop, structured by $\fb$, with invariants including $\gb$,
    is defined by the coefficient set:
    \begin{align*}
       \CL(\ab,h, \fb; \gb) = \{\bb\in \C^{l_1+\ldots+l_n} \mid \gb \subset I_{\mL(\ab, h, F_{\fb,\bb})} \}.
    \end{align*}
\end{definition}
In simple words, $\CL(\ab,h, \fb; \gb)$ is the set of all vectors $\bb \in \C^{l_1+\cdots +l_n}$ such that all polynomials in $\gb$ are polynomial invariants of the following loop:
\begin{center}
  \programbox[0.55\linewidth]{
\State$\xb\gets\ab$
\While{$h(\xb) \neq 0$}
\State $\xb \gets F_{\fb,\bb}(\xb)
$
\EndWhile
}\label{page:alg}
\end{center}
We now give a necessary and sufficient condition for checking whether given polynomials are loop invariants. This extends~\cite[Proposition 2.7]{bayarmagnai2024algebraic}, where the following statement was proven for a single polynomial invariant. 
\begin{proposition}\label{prop:invarianttest}
    Let $h,g_1,\ldots, g_m$ be polynomials in $\C[\xb]$. Let $z$ be a new indeterminate and let $$X=\V(zg_1,\ldots, zg_m)\subset \C^{n+1}.$$ Let $F:\Cn \longrightarrow \Cn$ be a polynomial map and define $$G_{h}(\xb,z)=(F(\xb),zh(\xb)).$$ Then, for $\ab \in \Cn$, $\gb \subset  I_{\mL(\ab, h, F)}$ if and only if $(\ab,1)\in S_{(G_h,X)}$.
\end{proposition}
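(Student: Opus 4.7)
The plan is to compute the forward iterates $G_h^{(m)}(\ab, 1)$ explicitly, translate the condition $(\ab,1)\in S_{(G_h,X)}$ into a family of iterate-level polynomial identities, and finally match these against the two clauses of Definition~\ref{def:PI} via a case split on whether the loop $\mL(\ab, h, F)$ terminates.

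First I would show by induction on $m$ that
\[
G_h^{(m)}(\ab, 1) = \Bigl(F^{(m)}(\ab),\; \prod_{l=0}^{m-1} h\bigl(F^{(l)}(\ab)\bigr)\Bigr),
\]
interpreting the empty product at $m=0$ as $1$. Since $X = \V(zg_1,\ldots,zg_m)$, the requirement that $G_h^{(m)}(\ab,1)\in X$ for every $m$ unfolds to: $g_i(\ab)=0$ for all $i$ (the $m=0$ case), and for every $m\ge 1$ and every $i$,
\[
\Bigl(\prod_{l=0}^{m-1} h\bigl(F^{(l)}(\ab)\bigr)\Bigr)\cdot g_i\bigl(F^{(m)}(\ab)\bigr) = 0.
\]

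Next I would split on whether $h(F^{(l)}(\ab))$ vanishes for some $l$. If it never vanishes, the $h$-product is always nonzero and the displayed identity forces $g_i(F^{(m)}(\ab))=0$ for all $m$ and $i$, matching clause~(i) of Definition~\ref{def:PI}. Otherwise let $m^\ast$ be the smallest index with $h(F^{(m^\ast)}(\ab))=0$; for $m\le m^\ast$ the $h$-product is nonzero and the identity forces $g_i(F^{(m)}(\ab))=0$, while for $m>m^\ast$ the $h$-product vanishes and the identity is automatic. This recovers clause~(ii), including $g_i(F^{(m^\ast)}(\ab))=h(F^{(m^\ast)}(\ab))=0$ together with $g_i=0$ and $h\neq 0$ at earlier iterates. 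For the converse, each clause of Definition~\ref{def:PI} makes one of the two factors in the displayed identity vanish at every $m$, so $(\ab,1)\in S_{(G_h,X)}$.

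The main obstacle, where care is needed, is parsing the quantifier structure of Definition~\ref{def:PI}—a universal clause in $m$ that admits an alternative existential escape—and reconciling it cleanly with the per-$m$ product conditions, in particular ensuring that the boundary iterate $m^\ast$ is treated symmetrically on both sides of the equivalence. Once the closed form for $G_h^{(m)}(\ab,1)$ is in hand, the rest is bookkeeping.
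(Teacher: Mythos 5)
Your proof is correct, but it takes a different route from the paper. The paper's proof is a two-line reduction: it invokes the already-established single-invariant case (\cite[Proposition 2.7]{bayarmagnai2024algebraic}, which states that $g_i \in I_{\mL(\ab,h,F)}$ iff $(\ab,1)\in S_{(G_h,X_i)}$ with $X_i=\V(zg_i)$) and then observes that, directly from Definition~\ref{def:invariantset}, $S_{(G_h,X)}=S_{(G_h,X_1)}\cap\cdots\cap S_{(G_h,X_m)}$, since $X=X_1\cap\cdots\cap X_m$ and preimages commute with intersections. You instead reprove the content of that cited proposition from scratch: the closed form $G_h^{(m)}(\ab,1)=\bigl(F^{(m)}(\ab),\prod_{l=0}^{m-1}h(F^{(l)}(\ab))\bigr)$, the translation of membership in $X$ into the per-iterate product identities, and the case split on the first index $m^\ast$ at which $h$ vanishes along the orbit (correctly noting that any existential witness in clause~(ii) of Definition~\ref{def:PI} must be this $m^\ast$). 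Your argument is sound and self-contained — it makes the role of the accumulated guard product explicit and does not depend on the external reference — whereas the paper's argument is shorter and modular, delegating all the trajectory bookkeeping to the single-invariant result and only adding the intersection observation needed to handle several invariants at once.
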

\begin{proof}
    Let $X_i=\V(zg_i)\subset \C^{n+1}$ be an algebraic variety for $i\in \{1,\ldots, m\}$. For any $i\in \{1,\ldots, m\}$, by~\cite[Proposition 2.7]{bayarmagnai2024algebraic}, $g_i \in I_{\mL(\ab, h, F)}$ if and only if $(\ab, 1)\in S_{(G_h, X_i)}$. Therefore, $g_1,\ldots, g_m\in I_{\mL(\ab, h, F)}$ if and only if $$(\ab,1)\in S_{(G_h, X_1)}\cap \ldots \cap S_{(G_h, X_m)}.$$ 
 It follows directly from Definition~\ref{def:invariantset} that
    $$S_{(G_h, X)}=S_{(G_h, X_1)}\cap \ldots \cap S_{(G_h, X_m)}.$$
Thus, $g_1,\ldots, g_m\in I_{\mL(\ab, h, F)}$ if and only if $(\ab,1)\in S_{(G_h, X)}$.
\end{proof}

In the following proposition, we present a necessary and sufficient condition for loops with a given structure to satisfy specified polynomial invariants.

\begin{proposition}\label{prop:loopgenerator}
    Let $\fb_1 =(f_{1,1},\ldots, f_{1,l_1}),\ldots, \fb_n =(f_{n,1},\ldots, f_{n,l_n})$ be sequences of polynomials in $\C[\xb]$ and define $\fb=(\fb_1,\ldots, \fb_n)$. Let $z, y_{1,1},\ldots, y_{1,l_1}, y_{n,1},\ldots, y_{n,l_n}$ be new indeterminates, let $h\in \C[\xb]$ and define the following polynomial map in $\C[\xb,\yb,z]$:
$$G_{\yb,h}(\xb,\yb,z)=\left(\displaystyle\sum_{i=1}^{l_1}y_{1,i}f_{1,i}(\xb),\ldots, \displaystyle\sum_{i=1}^{l_n}y_{n,i}f_{n,i}(\xb),\yb,zh(\xb)\right).$$
     Let $\gb=(g_1,\ldots, g_m)$ be a sequence of polynomials in $ \C[\xb]$, seen as elements of $\C[\xb,\yb,z]$, let $$X =\V(zg_1,\ldots, zg_m)\subset \C^{n+l_1+\ldots+l_n+1}.$$ Then, for any $\ab \in \Cn$, 
$$\CL(\ab,h,\fb;\gb)=\{\bb\in \C^{l_1+\ldots+l_n}\mid (\ab,\bb,1)\in S_{(G_{\yb,h},X)}\}.$$
\end{proposition}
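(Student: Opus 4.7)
The plan is to reduce this statement to Proposition~\ref{prop:invarianttest} by exploiting the fact that the extended map $G_{\yb,h}$ leaves the coordinates $\yb$ invariant. Concretely, fix $\bb\in\C^{l_1+\ldots+l_n}$, write $F=F_{\fb,\bb}$, and set $G_h(\xb,z)=(F(\xb),\,zh(\xb))$ together with $X'=\V(zg_1,\ldots,zg_m)\subset\C^{n+1}$. By definition, $\bb\in\CL(\ab,h,\fb;\gb)$ means $\gb\subset I_{\mL(\ab,h,F)}$, which by Proposition~\ref{prop:invarianttest} is equivalent to $(\ab,1)\in S_{(G_h,X')}$. The goal is therefore to show that $(\ab,1)\in S_{(G_h,X')}$ if and only if $(\ab,\bb,1)\in S_{(G_{\yb,h},X)}$.

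First I would prove by induction on $m\in\N$ that
\[
G_{\yb,h}^{(m)}(\ab,\bb,1)=\bigl(F^{(m)}(\ab),\,\bb,\,z_m\bigr),
\]
where $(F^{(m)}(\ab),z_m)=G_h^{(m)}(\ab,1)$. The base case is immediate, and the inductive step follows from the definition of $G_{\yb,h}$: the middle block of coordinates is literally the identity in $\yb$, so the $\yb$-entry is frozen at $\bb$; once $\yb=\bb$ is substituted, the first block $\sum_{i=1}^{l_k}y_{k,i}f_{k,i}(\xb)$ becomes exactly $F_k(\xb)$, matching the first block of $G_h$; the last coordinate evolves as $z\mapsto zh(\xb)$ in both maps, so the $z$-coordinates coincide.

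Next, I would observe that the defining polynomials $zg_1,\ldots,zg_m$ of $X$ do not involve the variables $\yb$, so the condition $G_{\yb,h}^{(m)}(\ab,\bb,1)\in X$ is equivalent to $z_m\,g_i(F^{(m)}(\ab))=0$ for every $i$, which is precisely the condition $G_h^{(m)}(\ab,1)\in X'$. Taking the conjunction over all $m\ge 0$, by Definition~\ref{def:invariantset}, gives
\[
(\ab,\bb,1)\in S_{(G_{\yb,h},X)}\iff(\ab,1)\in S_{(G_h,X')}.
\]
Combining this with Proposition~\ref{prop:invarianttest} yields the desired equivalence, proving that $\CL(\ab,h,\fb;\gb)$ is exactly the set of $\bb$ for which $(\ab,\bb,1)\in S_{(G_{\yb,h},X)}$.

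There is no significant obstacle: the statement is essentially Proposition~\ref{prop:invarianttest} applied to a parametric family of update maps, with the coefficients promoted to fresh variables. The only mild subtlety is to be careful that $X$ is defined inside $\C^{n+l_1+\ldots+l_n+1}$ by polynomials independent of $\yb$, so that membership in $X$ is insensitive to the frozen $\yb$-coordinates; this is precisely what makes the reduction go through cleanly.
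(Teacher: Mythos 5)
Your proof is correct and essentially mirrors the paper's: both hinge on Proposition~\ref{prop:invarianttest} combined with the observation that the $\yb$-coordinates stay frozen at $\bb$ under iteration. The only organizational difference is that the paper applies Proposition~\ref{prop:invarianttest} to the extended loop $\mL((\ab,\bb),h,G_{\yb})$ and then identifies its invariant ideal with that of $\mL(\ab,h,F_{\fb,\bb})$, whereas you apply it to $\mL(\ab,h,F_{\fb,\bb})$ and match the two invariant sets directly by your frozen-$\yb$ induction --- the same idea either way.
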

\begin{proof}
    Write $G_{\yb,h}(\xb,\yb,z)=(G_{\yb}(\xb,\yb),zh(\xb))$ where
        $$G_{\yb}(\xb,\yb)=\left(\displaystyle\sum_{i=1}^{l_1}y_{1,i}f_{1,i}(\xb),\ldots, \displaystyle\sum_{i=1}^{l_n}y_{n,i}f_{n,i}(\xb),\yb\right).$$
    By Proposition~\ref{prop:invarianttest}, $g_1,\ldots, g_m$ are polynomial invariants of $\mL((\ab,\bb), h , G_{\yb})$ if and only if $(\ab,\bb,1)\in S_{(G_{\yb,h},X)}$. 
        Since the value of $\yb$ remains $\bb$ after any iteration, we know that $$F^N_{y}(\xb, \bb)= (F^N(\xb),\bb)$$
    for any $N\in \N$. Therefore, $\gb\subset I_{\mL(\ab, h, F_{\fb,\bb})}$ if and only if $\gb\subset  I_{\mL((\ab,\bb), h, G_{\yb})}$, that is $(\ab,\bb,1)\in S_{(G_{\yb,h},X)}$ by above.
\end{proof}
Since the invariant set is an algebraic variety,by~Proposition~\ref{prop:loopgenerator}, $\CL(\ab, h, \fb; \gb)$ is also an algebraic variety. Specifically, its defining equations are obtained by substituting $\xb = \ab$ and $z = 1$ into the system defining the invariant set. 

We now present our main algorithm for computing the polynomials that define $\CL(\ab, h, \fb; \gb)$.

\begin{algorithm}[H]
\caption{\GenerateLoops}\label{algo:generateloops}
\begin{algorithmic}[1]
\Require$ h, g_1,\ldots, g_m, f_{1,1},\ldots, f_{1, l_1},\ldots,f_{n,1},\ldots, f_{n, l_n}$ in $\Q[\xb]$ and $\ab\in \Q^n$.
\Ensure A set of polynomials $\{P_1,\ldots, P_s\}$ such that $\CL(\ab,h,\fb;g)$ is $\V(P_1,\ldots, P_s)$
\State$G_{\yb,h} \gets \left(\displaystyle\sum_{i=1}^{l_1}y_{1,i}f_{1,i}(\xb),\ldots, \displaystyle\sum_{i=1}^{l_n}y_{n,i}f_{n,i}(\xb),\yb, zh(\xb)\right)$;
\State\label{Step:A1.3}$\{Q_1,\ldots, Q_s\} \gets \InvariantSet((zg_1,\ldots, zg_m), G_{\yb,h});$
\State\label{Step:A1.4}$\{P_1(\yb),\ldots, P_s(\yb)\}\gets \{Q_1(\ab,\yb,1),\ldots, Q_s(\ab,\yb,1)\} $
\State\Return $\{P_1,\ldots, P_s\}$;
\end{algorithmic}
\end{algorithm}
We now prove the correctness of Algorithm~\ref{algo:generateloops}.
\begin{theorem}
    Let $h\in \C[\xb], \gb=(g_1,\ldots, g_m)$ and $\fb_1=(f_{1,1},\ldots, f_{1, l_1}),\ldots,\fb_n=(f_{n,1},\ldots, f_{n, l_n})$ be sequences of polynomials in 
    $\Q[\xb]$. Let $\ab\in \Q^n$ and define $\fb=(\fb_1,\ldots, \fb_n)$. On input $\ab,h, \fb,\gb$, Algorithm~\ref{algo:generateloops} outputs a set of polynomials such that the vanishing locus of the polynomials is $\CL(\ab,h,\fb; \gb)$.
\end{theorem}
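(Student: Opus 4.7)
The plan is to trace through the three lines of Algorithm~\ref{algo:generateloops} and verify that each step realizes exactly the description given in Proposition~\ref{prop:loopgenerator}. The underlying characterization
$$\CL(\ab,h,\fb;\gb) = \{\bb \in \C^{l_1+\ldots+l_n} \mid (\ab,\bb,1) \in S_{(G_{\yb,h},X)}\}$$
already does the heavy conceptual lifting, so the task reduces to showing that the algorithm faithfully computes the right-hand side and then extracts the fiber over $(\xb,z)=(\ab,1)$.

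First I would observe that the polynomial map $G_{\yb,h}$ constructed in Line~1 of the algorithm coincides, on the nose, with the one defined in Proposition~\ref{prop:loopgenerator}, and that the input sequence $(zg_1,\ldots,zg_m)$ to \InvariantSet in Line~\ref{Step:A1.3} cuts out precisely the algebraic variety $X \subset \C^{n+l_1+\ldots+l_n+1}$ appearing in that proposition. By the termination and correctness of Algorithm~\ref{algo1} (which follows from Proposition~\ref{prop:stabilization} together with \cite[Theorem 2.4]{bayarmagnaiIssac}), the output $\{Q_1,\ldots,Q_s\}$ of Line~\ref{Step:A1.3} satisfies
$$\V(Q_1,\ldots,Q_s) \;=\; S_{(G_{\yb,h},X)}.$$

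Next I would translate the membership condition $(\ab,\bb,1)\in S_{(G_{\yb,h},X)}$ into a system in $\bb$ alone. Since the $Q_i$ lie in $\C[\xb,\yb,z]$, we have $(\ab,\bb,1)\in \V(Q_1,\ldots,Q_s)$ if and only if $Q_i(\ab,\bb,1)=0$ for every $i$, i.e.\ $P_i(\bb)=0$ for every $i$, where $P_i(\yb) = Q_i(\ab,\yb,1)$ is exactly the specialization performed in Line~\ref{Step:A1.4}. Combining with Proposition~\ref{prop:loopgenerator}, this gives
$$\bb \in \CL(\ab,h,\fb;\gb) \iff (\ab,\bb,1)\in S_{(G_{\yb,h},X)} \iff \bb \in \V(P_1,\ldots,P_s),$$
which is exactly the claim.

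There is no real obstacle here: the proof is a bookkeeping argument that the algorithm's three steps correspond directly to (i) the construction of $G_{\yb,h}$ and $X$ in Proposition~\ref{prop:loopgenerator}, (ii) the computation of the invariant set $S_{(G_{\yb,h},X)}$ by Algorithm~\ref{algo1}, and (iii) the specialization $\xb \mapsto \ab$, $z \mapsto 1$ that passes from the characterization of $S_{(G_{\yb,h},X)}$ to the coefficient variety $\CL(\ab,h,\fb;\gb)$. The one point worth stating explicitly is that specialization commutes with vanishing, i.e.\ that the fiber $\{\bb : (\ab,\bb,1)\in \V(Q_1,\ldots,Q_s)\}$ equals $\V(P_1,\ldots,P_s)$, which is immediate from the definition of $P_i$. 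Termination of the algorithm is inherited from \InvariantSet, since Lines~1, \ref{Step:A1.4}, and the return are each single substitutions.
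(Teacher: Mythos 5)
Your proposal is correct and follows essentially the same route as the paper's proof: invoke the correctness of \InvariantSet (via \cite[Theorem 2.4]{bayarmagnaiIssac}) so that $\V(Q_1,\ldots,Q_s)=S_{(G_{\yb,h},X)}$, apply Proposition~\ref{prop:loopgenerator}, and conclude by the specialization $\xb\mapsto\ab$, $z\mapsto 1$ defining the $P_i$. Your added remarks on termination and on specialization commuting with vanishing are fine but do not change the argument.
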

\begin{proof}
Let $X=\V(zg_1,\ldots, zg_m)\subset \C^{n+l_1+\ldots+l_n+1}.$ At Step~\ref{Step:A1.3}, by \cite[Theorem 2.4]{bayarmagnaiIssac}, \InvariantSet outputs 
$$\{Q_1(\xb,\yb,z),\ldots, Q_s(\xb,\yb,z)\}$$ 
such that $S_{(G_{\yb,h},X)}$ is the common solution of $\{Q_1,\ldots, Q_s\}$. Therefore, by Proposition~\ref{prop:loopgenerator}, $g_1,\ldots, g_m$ are polynomial invariants of $\mL(\ab, h, F_{\fb,\bb})$ if and only if $$Q_1(\ab,\bb,1)=\ldots = Q_s(\ab,\bb,1)=0.
$$ Consequently, $g_1,\ldots, g_m\in I_{\mL(\ab,h, F_{\fb,\bb})}$ if and only if $$P_1(\bb)=\ldots=P_s(\bb)=0$$ which proves the correctness of Algorithm~\ref{algo:generateloops}.
\end{proof}

\begin{example}~\label{exa:algo2}
Consider the loop $\mathcal{L}((1,1,-1), 1, F )$ together with polynomial invariants $g_1$ and $g_2$ from Example~\ref{ex:introduction}. 
  Let us compute all loops with the polynomial map $F$ that satisfy the polynomial invariants $ g_1$ and $ g_2$
That is, we determine all loops of the given form with the precondition $(x_1 = 1, x_2 = 1, x_3 = -1)$ and the postcondition $(x_2^2 - x_1 = 0, x_3^3 + 2x_2^2 - x_1 = 0)$. 
\noindent Let $F = (F_1,F_2,F_3)$, $\gb=\{g_1,g_2\}$ and define $$\fb=\{\{x_1^3,x_2^2\},\{x_1,x_2^2\},\{x_1\}\}.$$ From the structure of a loop, we know that 
$$F_1\in span\{x_1^3,x_2^2\}, F_2\in span\{x_1,x_2^2\}, F_3\in span\{x_1\}.$$
Thus, the input to Algorithm~\ref{algo:generateloops} is $(\{1\}, \gb, \fb, \{1,1,-1\}).$ Let $X=\V(zg_1, zg_2)\subset \C^9$ and define a polynomial map $$G_{\yb,h}(\xb,\yb,z)=(y_1x_1^3+y_2x_2^2,y_3x_1+y_4x_2^2,y_5x_1,\yb,z)$$

At Step~\ref{Step:A1.3}, on input $G_{\yb,h}$ and $X$, ``InvariantSet'' computes polynomials whose vanishing locus is the invariant set $S_{(G_{y,h},X)}$ and outputs $\{Q_1(\xb,\yb,z),\ldots, Q_6(\xb,\yb,z)\}$. After substituting the initial values of the loop into the polynomials, we obtain only four non-zero polynomials $P_1(\yb),\ldots, P_4(\yb)$ whose vanishing locus is $\CL((1,1,-1),1,\fb;\gb)$:
$$
P_1 = (y_3+y_4)^2-y_1-y_2; \quad  P_2 = y_5^3+2(y_3+y_4)^2-y_1-y_2;
$$
$$
P_3 =2y_3^4y_4^2+8y_3^3y_4^3+12y_3^2y_4^4+8y_3y_4^5+2y_4^6+y_1^3y_5^3+3y_1^2y_2y_5^3+3y_1y_2^2y_5^3+y_2^3y_5^3+4y_1y_3^3y_4+4y_2y_3^3y_4+8y_1y_3^2y_4^2$$
$$+8y_2y_3^2y_4^2+4y_1y_3y_4^3+4y_2y_3y_4^3-y_1^4-3y_1^3y_2-3y_1^2y_2^2-y_1y_2^3+2y_1^2y_3^2+4y_1y_2y_3^2+2y_2^2y_3^2-y_2y_3^2-2y_2y_3y_4-y_2y_4^2;
$$
\begin{flushleft}
$
P_4 = 
y_3^4y_4^2+4y_3^3y_4^3+6y_3^2y_4^4+4y_3y_4^5+y_4^6+2y_1y_3^3y_4+2y_2y_3^3y_4+4y_1y_3^2y_4^2+4y_2y_3^2y_4^2+2y_1y_3y_4^3+2y_2y_3y_4^3-y_1^4-3y_1^3y_2-3y_1^2y_2^2-y_1y_2^3+y_1^2y_3^2+2y_1y_2y_3^2+y_2^2y_3^2-y_2y_3^2-2y_2y_3y_4-y_2y_4^2.
$
\end{flushleft}
\end{example}

\section{Polynomial system solving}\label{sec:polysolve}
In the previous section, we have seen how Algorithm~\ref{algo:generateloops} can compute a system of multivariate polynomials whose solutions correspond exactly to the loops with the given structures and invariants. This approach reduces the problem of loop synthesis to solving polynomial systems.

However, since we aim to find loops with finite, exact representations on computers, we focus on \emph{rational solutions}. 
Unlike solutions in $\C$ (by Hilbert's Nullstellensatz \cite{Hilbert1893}, see e.g., \cite[Chap. 4, \S 1]{cox2013ideals}) or $\R$ (via Tarski-Seidenberg’s theorem \cite{Tar1951, Sei1954}, see e.g., \cite{bpr2006}), determining whether a multivariate polynomial equation has a rational solution is a major open problem in number theory \cite{Shla2011}. For integer solutions, this is Hilbert's Tenth Problem, which is undecidable \cite{Shla2011}. We outline three main strategies to address this, though none is fully satisfactory or complete.

\subsection{Exploit structure}
Although no general algorithm is known to compute (or even decide the existence of) rational solutions to multivariate polynomials using exact methods, many real-world cases can still be successfully addressed with existing techniques. To illustrate this, consider the polynomial system obtained at the end of Example~\ref{exa:algo2}. While this example may appear overly simplistic or too specific, it turns out that all benchmarks presented in Section~\ref{sec:implementation} can be tackled in a similar manner.

\begin{example}  
The variety $\V(P_1, \dotsc, P_4)$ can be decomposed into finitely many irreducible components, which in turn decompose the system into potentially simpler subsystems. This can be done using classical methods from computer algebra \cite[Chap. 4, \S 6]{cox2013ideals}, and here we apply the ``\texttt{minimalPrimes}'' command from Macaulay2~\cite{M2}. We obtain the following five components:

\medskip
\begin{enumerate}
    \item $\V(y_5,y_3+y_4,y_1+y_2)$ 
    \item $\V(y_5+1,y_3+y_4-1,y_1+y_2-1)$
    \item $\V(y_5+1,y_3+y_4+1,y_1+y_2-1)$
    \item $\V(y_3+y_4-1,y_1+y_2-1, y_5^2-y_5+1)$
    \item $\V(y_3+y_4+1,y_1+y_2-1, y_5^2-y_5+1)$
\end{enumerate}
\vspace{3mm}

\noindent Thus, $\mL((1,1,-1),1,F)$ satisfies the polynomial invariants $\{g_1,g_2\}$ if and only if $(\lambda_1,\ldots, \lambda_5)$ lies in one of the above irreducible components. 
The well-structured polynomials defining each component allow us to fully solve the problem.  

For the first three components, which involve only linear polynomials, the problem reduces to a standard linear algebra routine, benefiting from optimized methods (see e.g., \cite{CS2011}). More specifically, if $\mu_1$ and $\mu_2$ are parameters in $\Q$, we obtain the following loop map:
\begin{enumerate}
    \item $F_1(x_1,x_2,x_3)= \big(\mu_1(x_1^3-x_2^2), \mu_2(x_1-x_2^2), 0\big)$
    \item $F_2(x_1,x_2,x_3)= \big(\mu_1x_1^3+(1-\mu_1)x_2^2, \mu_2x_1+(1-\mu_2)x_2^2, -1\big)$
    \item $F_3(x_1,x_2,x_3)= \big(\mu_1x_1^3-(1+\mu_1)x_2^2, \mu_2x_1+(1-\mu_2)x_2^2, -1\big)$
\end{enumerate}

\smallskip\noindent For the last two components, there is no rational solution, because the last equation $y_5^2-y_5+1$ has no rational roots.
\end{example}

Another specific case occurs when the polynomial system computed by Algorithm~\ref{algo:generateloops} (or a subsystem of it) has only finitely many solutions. Geometrically, this means the associated variety (or an irreducible component) consists of finitely many points, i.e., has \emph{dimension zero}. In this case, such systems can be reduced (see e.g., \cite[\S 3]{DL2008} and \cite{Rou1999}) to polynomial systems with rational coefficients of the form:
$$
Q_1(x_1) = 0 \quad \text{and} \quad x_i = Q_i(x_1) \quad \text{for all } i \geq 2.
$$
This reduces to finding all rational solutions of a univariate polynomial, which is either done using arithmetic-based modular algorithms \cite{Lo1983} or efficient factoring algorithms to identify its linear factors in $\Q[x_1]$ \cite[Theorem 15.21]{von2013modern}.

Thus, when the system has finitely many solutions, we can always find all rational ones. Efficient software, such as \textsf{AlgebraicSolving.jl}\footnote{\url{https://algebraic-solving.github.io}}, based on the msolve library \cite{BES2021}, is designed for such tasks. Many of the benchmarks in the next section involve polynomial systems of this type, which we expect when the degree and support of $F$ are close.

\subsection{Numerical methods}
Numerical methods for solving polynomial systems, such as HomotopyContinuation.jl~\cite{HomotopyContinuation}, provide powerful tools for approximating solutions.
These methods use numerical algebraic geometry, particularly homotopy continuation. 
    Unlike symbolic approaches that rely on Gr\"obner bases or resultants, numerical methods can efficiently handle large and complex systems. Using these methods, we can find numerical solutions of the polynomial system generated by Algorithm~\ref{algo:generateloops} and check if close integers or rational numbers
    are valid solutions of the polynomial system. 
\begin{example}\label{ex4}
Consider the polynomial system $\{P_1, \dotsc, P_4\}$ from Example~\ref{exa:algo2}. This system either has no solutions or infinitely many, as there are 5 variables and only 4 equations. A classic approach to handle this is to introduce a random linear form with integer coefficients. 
   Using HomotopyContinuation.jl, we find 15 numerical real solutions to the augmented system, 12 of which correspond to valid integer solutions.
\end{example}

Despite their efficiency, numerical methods have several drawbacks. First, they provide approximate solutions, which may lack exact algebraic structure and require further validation. Second, these methods can struggle with singular solutions. Additionally, homotopy continuation techniques may occasionally miss solutions.

\subsection{Satisfiability Modulo Theories (SMT) Solvers}

SMT solvers determine the satisfiability of logical formulas combining Boolean logic with mathematical theories, such as integer or real arithmetic. By verifying the satisfiability of $$\exists x_1 \ldots \exists x_n (f_1(x_1, \ldots, x_n) = 0) \wedge \ldots \wedge (f_m(x_1, \ldots, x_n) = 0),$$
one can determine the existence of integer or rational solutions to a system of polynomial equations.

SMT solvers leverage automated reasoning, efficient decision procedures, and integration with other logical theories. However, they can struggle with high-degree polynomials and may fail to find general integer solutions due to undecidability. Nevertheless, as discussed in Section~\ref{sec:implementation}, the SMT solver \texttt{Z3} we used successfully finds integer solutions for most systems of polynomials generated by Algorithm~\ref{algo:generateloops}.

\section{Implementation and Experiments
}\label{sec:implementation}
In this section, we present the implementation of Algorithm~\ref{algo:generateloops} and report experiments on benchmarks from related works~\cite{ISSAC2023Laura,hitarth_et_al:LIPIcs.STACS.2024.41,humenberger2022LoopSynthesis}, which are limited to generating linear loops. The implementation in Macaulay2~\cite{M2} is available at
\begin{center}\footnotesize
\href{https://github.com/Erdenebayar2/Synthesizing_Loops.git}{\texttt{https://github.com/Erdenebayar2/Synthesizing\_Loops.git}}
\end{center}
After running Algorithm~\ref{algo:generateloops}, we use the SMT solver \texttt{Z3}~\cite{z3solver} to find a common nonzero integer solution for the polynomials.

\subsection{Implementation details}
The experiments below were performed on a laptop equipped with a 4.8 GHz Intel i7 processor, 16 GB of RAM and 25 MB L3 cache. This prototype implementation is primarily based on the one in \cite{bayarmagnaiIssac} to which we refer for further details.
\subsection{Experimental results}
Let $\fb_1 = (f_{1,1}, \ldots, f_{1,l_1}), \ldots, \fb_n = (f_{n,1}, \ldots, f_{n,l_n})$ and $\gb = (g_1, \ldots, g_m)$ be sequences of polynomials in $\C[\xb]$, and define $\fb = (\fb_1, \ldots, \fb_n)$. Let $h \in \C[\xb]$. 
In Tables~\ref{table1} and~\ref{table2}, we present the outputs and timings for Algorithm~\ref{algo:generateloops}, which computes polynomials whose vanishing locus is $\CL(\ab, h, \fb; g)$, along with timings for \texttt{Z3}~\cite{z3solver} for finding a common non-zero integer solution to the polynomials generated by Algorithm~\ref{algo:generateloops}. The first row shows the benchmarks, while the first column describes the structures of the polynomial maps of loops. 
The benchmarks sources are available at 

\begin{center}\footnotesize
\href{https://github.com/Erdenebayar2/Synthesizing_Loops/tree/master/software/loops}
{\scalebox{.92}{\texttt{https://github.com/Erdenebayar2/Synthesizing\_Loops/software/loops}}}
\end{center}

\vspace{0em}

Table~\ref{table1} presents the execution timings for Algorithm~\ref{algo:generateloops} and \texttt{Z3}~\cite{z3solver}, both using polynomial equations generated by Algorithm~\ref{algo:generateloops}. Timings are in seconds, with a timeout of 300 seconds. ``F'' indicates \texttt{Z3} failed to find an integer solution, while ``NI'' denotes no input was provided to \texttt{Z3} when the algorithm reached the time limit.

\medskip

In the following tables, $n$ denotes the number of program variables, $m$ is the number of polynomial invariants $\gb$ and $d$ is the maximal degree of polynomial invariants $\gb$. Moreover, $D$ denotes the maximal degrees of polynomials in $\fb_1, \ldots, \fb_n$, and $l=l_1+ \cdots+ l_n$.

\medskip

In most cases, when Algorithm~\ref{algo:generateloops} terminates, \texttt{Z3} quickly finds a common non-zero integer solution for the generated polynomials within 0.3 seconds. Thus, finding a non-zero integer solution is not a bottleneck in our approach. The primary computational bottleneck lies in generating the polynomial systems for loop generation. Additionally, we observe that as more polynomial invariants are provided, Algorithm~\ref{algo:generateloops} terminates faster.

\newcommand{\Fexp}{F}
\newcommand{\NIexp}{NI}

\begin{table}[H]
    \centering\hspace*{-0.2cm}
    \scalebox{0.6}{
    \begin{tabular}{|*{4}{c|}*{5}{|c|c|}}
    \hline
        \multicolumn{4}{|c||}{Polynomial map} & \multicolumn{2}{|c||}{D=1, $l=3$} & \multicolumn{2}{|c||}{D=1, $l=4$} &  \multicolumn{2}{|c||}{D=1, $l=5$} & \multicolumn{2}{|c||}{D=2, $l=2$} & \multicolumn{2}{|c|}{D=2, $l=3$} \\ \hline
    Benchmark &$n$ & $m$& $d$ &Alg.~\ref{algo:generateloops} & \texttt{Z3} & Alg.~\ref{algo:generateloops} & \texttt{Z3} & Alg.~\ref{algo:generateloops} & \texttt{Z3} & Alg.~\ref{algo:generateloops} & \texttt{Z3} & Alg.~\ref{algo:generateloops} & \texttt{Z3}\\ \hline      
        Ex2 & 2&1 & 4 & {0.02} & \Fexp & {31.1} & \Fexp & TL & \NIexp & {0.01} & 0.06 &TL&\NIexp \\ \hline
        Ex3 & 3&2 & 3 & {0.01} & 0.06 & {0.04} & 0.06 & 4.4 & 0.2 & {0.01} & 0.06 &0.018 & 0.07 \\ \hline
        Ex3Ineq & 3&2 & 3 & {0.009} & 0.06 & {11.4} & 0.06 & TL & \NIexp & {0.008} & 0.05 &0.03 & 0.05 \\ \hline
        Ex4 & 2&1 & 2 & {0.03} & 0.17 & {0.16} & TL & TL & \NIexp & {0.01} & 0.07 &0.13&0.07\\ \hline
          sum1 & 3&2 & 2 & {0.02} & 0.06 & {0.13} & 0.06 & 1.1 & 0.06 & {0.01} & 0.05 &0.02 & 0.06 \\ \hline
        square & 2&1 & 2 & {0.01} & 0.06 & {0.5} & TL & TL & \NIexp & {0.01} & 0.06 &0.015 & 0.26 \\ \hline
        square\_conj & 3&2 & 2 & {0.019} & 0.06 & {0.14} & 0.09 & 0.39 & 0.06 & {0.007} & 0.05 &0.015 & 0.06 \\ \hline
        fmi1 & 2&1 & 2 & {1.19} & 0.06 & {161.74} & 0.06 & TL & \NIexp & {237.1} & 0.06 &TL & \NIexp \\ \hline
       fmi2 & 3&2 & 2 & {0.01} & 0.06 & {0.015} & 0.06 & 0.017 & 0.07 & {0.009} & 0.07 &0.01 & 0.07 \\ \hline
      fmi3 & 3&2 & 2 & {0.01} & 0.06 & {0.03} & 0.05 & 0.39 & 0.09 & {0.01} & 0.06 &0.2 & 0.11 \\ \hline
       intcbrt & 3&2 & 2 & {0.25} & 0.07 & {16.6} & 0.08 & TL & \NIexp & {0.01} & 0.06 &0.65 &0.17 \\ \hline
cube\_square & 3&1 & 3 & {0.01} & 0.07 & {0.018} &TL& TL & \NIexp & {0.15} & 0.06 &0.96 & 106 \\ \hline
       
    \end{tabular}
    }
    \caption{\label{table1}Timings for Algorithm~\ref{algo:generateloops} in seconds; 
    }
\end{table}

Table~\ref{table2} presents the output of Algorithm~\ref{algo:generateloops}, which is a polynomial system whose solution set exactly corresponds to $\CL(\ab,h,\fb;g)$. For each choice of $D$ and $l$, we report the number $s$ of non-zero polynomials in the output system and indicate whether or not there are finitely many solutions.

\newcommand{\Fin}{$<\!\!\infty$}
\newcommand{\NS}{$\#$sols\xspace}

\begin{table}[H]
    \centering\hspace*{-0.2cm}
    \scalebox{0.65}{
    \begin{tabular}{|*{4}{c|}*{5}{|c|c|}}
    \hline
        \multicolumn{4}{|c||}{Polynomial map} & \multicolumn{2}{|c||}{D=1, $l=3$} & \multicolumn{2}{|c||}{D=1, $l=4$} &  \multicolumn{2}{|c||}{D=1, $l=5$} & \multicolumn{2}{|c||}{D=2, $l=2$} & \multicolumn{2}{|c|}{D=2, $l=3$} \\ \hline
    Benchmark &$n$ & $m$& $d$ &$s$ & \NS & $s$ & \NS & $s$ & \NS & $s$ & \NS & $s$ & \NS\\ \hline      
        Ex2 & 2&1 & 4 & 3 & $\infty$ & 4 & $\infty$ & TL & TL & 2 & $\infty$ &TL&TL \\ \hline
        Ex3 & 3&2 & 3 & 2 & $\infty$ & 4 & $\infty$ & 6 & $\infty$ & 4 & \Fin &4& \Fin\\ \hline
        Ex3Ineq & 3&2 & 3 & 2 & $\infty$ & 4 &$\infty$ & TL & TL & 4 & \Fin &4& \Fin\\ \hline
        Ex4 & 2&1 & 2 & 3 & $\infty$ & 3 &$\infty$ & TL & TL & 3 & \Fin &3&$\infty$\\ \hline
          sum1 & 3&2 & 2& 4 & $\infty$ & 6 & $\infty$ & 6 & $\infty$ & 2 & \Fin &4&\Fin \\ \hline
        square & 2&1 & 2 & 2 & $\infty$ & 4 & $\infty$ & TL & TL & 2 & \Fin &3&$\infty$ \\ \hline
        square\_conj & 3&2 & 2 & 4 & \Fin & 6 & $\infty$ & 6 & $\infty$ & 4 & \Fin &4&\Fin \\ \hline
        fmi1 & 2&1 & 2 & 0 & $\infty$ & 0& $\infty$ & TL & TL & 0 & $\infty$ &TL&TL \\ \hline
       fmi2 & 3&2 & 2 & 2 & $\infty$ & 4 & $\infty$ & 4 & $\infty$ & 2 & \Fin &4& \Fin\\ \hline
      fmi3 & 3&2 & 2 & 4 & \Fin & 4 & $\infty$ & 6 & $\infty$ & 2 &\Fin &4&\Fin \\ \hline
       intcbrt & 3&2 & 2& 4& \Fin & 4 & $\infty$ & TL & TL & 2 & \Fin &4&\Fin\\ \hline
cube\_square & 3&1 & 3 & 2 & $\infty$ & 3 & $\infty$ & TL & TL &3 &\Fin &5&\Fin \\ \hline
       
    \end{tabular}
    }
    \caption{\label{table2} Data on outputs of Algorithm~\ref{algo:generateloops} 
    }
\end{table}

We observe that Macaulay2~\cite{M2} computed the irreducible decompositions of varieties defined by the polynomials generated by Algorithm~\ref{algo:generateloops} in most cases within a few seconds. In many instances, the irreducible components of the varieties are defined by linear equations. In Table~\ref{table2}, even when the dimension of the varieties is 0, the varieties are not empty in all cases, indicating that they consist of finitely many points.

\section{Conclusion}
We presented a method for generating polynomial loops with specified structures that satisfy given polynomial invariants, reducing the problem to solving a system of polynomial equations. Along with an algorithm for constructing this system, we discussed various strategies for solving it. While previous work has primarily focused on linear maps, our approach extends beyond this limitation, marking a significant advancement in the synthesis of non-linear programs.

This work opens up several promising avenues for future research. A natural next step would be to explore the synthesis of more complex loop structures, such as branching and nested loops. Additionally, extending the approach to handle polynomial invariants in the form of inequalities, rather than just equations, could greatly enhance the flexibility and applicability of the method.

\medskip 
\noindent
\textbf{Acknowledgments.}~We are grateful to Teresa Krick for her very helpful comments on the first version of this article. The third author is also affiliated with Inria, CNRS, ENS de Lyon, Université Claude Bernard Lyon 1, LIP (UMR 5668), 69342 Lyon Cedex 07, France.
The first two 
authors are partially supported by the KU Leuven grant iBOF/23/064, the FWO grants G0F5921N and G023721N and the third author is partially supported by 
the ANR grant ANR-20-CE48-0014.

\bibliographystyle{abbrv}
\bibliography{LoopGenerator/references}

\medskip
{\small\noindent {\bf Authors' addresses}
\medskip

\noindent{Erdenebayar Bayarmagnait,  KU Leuven}\hfill {\tt erdenebayar.bayarmagnai@kuleuven.be}
\\
\noindent{Fatemeh Mohammadi, 
KU Leuven} \hfill {\tt fatemeh.mohammadi@kuleuven.be}
\\ 
\noindent{R\'emi Pr\'ebet,  Inria} \hfill {\tt remi.prebet@ens-lyon.fr}
\\

\end{document}